\newcommand{\cA}{{\mathcal A}}
\newcommand{\bF}{{\boldsymbol F}}
\newcommand{\ba}{{\boldsymbol a}}
\newcommand{\bQ}{{\boldsymbol Q}}
\newcommand{\bA}{{\boldsymbol A}}
\newcommand{\bTheta}{{\boldsymbol \Theta}}
\newcommand{\btheta}{{\boldsymbol \theta}}
\newcommand{\bb}{{\boldsymbol b}}
\newcommand{\br}{{\boldsymbol r}}
\newcommand{\bB}{{\boldsymbol B}}
\newcommand{\bC}{{\boldsymbol C}}
\newcommand{\bD}{{\boldsymbol D}}
\newcommand{\bE}{{\boldsymbol E}}
\newcommand{\bZ}{{\boldsymbol Z}}
\newcommand{\bc}{{\boldsymbol c}}
\newcommand{\bx}{{\boldsymbol x}}
\newcommand{\cT}{{\mathcal T}}
\newcommand{\bu}{{\boldsymbol u}}
\newcommand{\bI}{{\boldsymbol I}}
\newcommand{\bY}{{\boldsymbol Y}}
\newcommand{\bV}{{\boldsymbol V}}
\newcommand{\bX}{{\boldsymbol X}}
\newcommand{\bW}{{\boldsymbol W}}
 \DeclareMathOperator{\trace}{Tr}
 \DeclareMathOperator{\toep}{toep}
 \DeclareMathOperator{\diag}{diag}
  \DeclareMathOperator{\rank}{rank}
    \DeclareMathOperator{\argmin}{argmin}
\begin{document}
\theoremstyle{plain}\newtheorem{lemma}{\textbf{Lemma}}\newtheorem{theorem}{\textbf{Theorem}}\newtheorem{corollary}{\textbf{Corollary}}\newtheorem{assumption}{\textbf{Assumption}}\newtheorem{example}{\textbf{Example}}\newtheorem{definition}{\textbf{Definition}}
\newtheorem{prop}{\textbf{Proposition}}
\theoremstyle{definition}

\theoremstyle{remark}\newtheorem{remark}{\textbf{Remark}}

\title{Joint Sparsity Recovery for Spectral Compressed Sensing}
\name{Yuejie Chi\thanks{This work was partially supported by the Simons Foundation under grant No. 276631.}}
\address{Department of Electrical and Computer Engineering \\ The Ohio State University, Columbus, Ohio 43210}

\maketitle

\begin{abstract}
Compressed Sensing (CS) is an effective approach to reduce the required number of samples for reconstructing a sparse signal in an a priori basis, but may suffer severely from the issue of basis mismatch. In this paper we study the problem of simultaneously recovering multiple spectrally-sparse signals that are supported on the same frequencies lying arbitrarily on the unit circle. We propose an atomic norm minimization problem, which can be regarded as a continuous counterpart of the discrete CS formulation and be solved efficiently via semidefinite programming. Through numerical experiments, we show that the number of samples per signal may be further reduced by harnessing the joint sparsity pattern of multiple signals. 

\end{abstract}

\begin{keywords}
compressed sensing, basis mismatch, atomic norm, joint sparsity 
\end{keywords}

\section{Introduction}

Compressed Sensing (CS) \cite{CandRomTao06,Don2006} asserts a signal can be recovered from a small number of linear measurements if it can be regarded as a sparse or compressible signal in an a priori basis. An important consequence is its application in analog-to-digital conversion and spectrum estimation, where the signal of interest is spectrally sparse, composing of a small number $r$ of frequency components. It is shown that if the frequencies of a signal all lie on the DFT grid, the signal of length $n$ can then be recovered exactly from a random subset of $O(r\log n)$ samples with high probability \cite{candes2007sparsity}. However, in reality the frequencies of a signal never lie on a grid, no matter how fine the grid is; rather, they are continuous-valued and determined by the mother nature, e.g. a point spread function. Performance degeneration of CS algorithms is observed and studied in \cite{chi2011sensitivity,scharf2011sensitivity,pakroohanalysis} when a ``basis mismatch'' between the actual frequencies and the assumed grid occurs, and many algorithms have been proposed to mitigate the basis mismatch effect \cite{Yang2011offgrid,fannjiang2012coherence}.

More recently, several approaches have been proposed to assume away the need for an a priori basis while maintaining the capabilities of subsampling in CS with rigorous performance guarantees. One recent approach is via atomic norm minimization \cite{chandrasekaran2012convex}, which is a general recipe for designing convex solutions to parsimonious model selection. It has been successfully applied to recover a spectrally-sparse signal from a subset of its consecutive samples \cite{CandesFernandez2012SR} or randomly selected samples \cite{TangBhaskarShahRecht2012}. In particular, Tang et. al. showed that a spectrally-sparse signal can be recovered from $O(r\log n \log r)$ random samples with high probability when the frequencies satisfy a mild separation condition \cite{TangBhaskarShahRecht2012}. This framework has also been extended to handle higher-dimensional frequencies \cite{chi2013compressive}. Another approach is proposed in \cite{chen2013spectral,chen2013robust}, where the problem is reformulated into a structured matrix completion inspired by matrix pencil \cite{Hua1992}. For this approach, it is shown that $O(r\log^2 n)$ randomly selected samples are sufficient to guarantee perfect recovery with high probability under some mild incoherence conditions and the approach is amenable to higher-dimensional frequencies.  With slightly more samples, both approaches can recover off-the-grid frequencies at an arbitrary precision. We refer interested readers for respective papers for details. 

In this paper, we study the problem of simultaneously recovering multiple spectrally-sparse signals that are supported on the same frequencies lying arbitrarily on the unit circle. With Multiple measurement Vectors (MMV) in a CS framework, it is shown to further reduce the required number of samples \cite{tropp2006algorithms2,tropp2006algorithms,lee2012subspace,kim2012compressive,davies2012rank,mishali2008reduce} by harnessing the joint sparsity pattern of different signals. Our proposed algorithm can be regarded as a continuous counterpart of the MMV model in CS, and is based on atomic norm minimization which can be solved efficiently using semidefinite programming.  We characterize a dual certificate for the optimality of the proposed algorithm. Comparisons are given for joint sparse recovery between conventional CS algorithms and the proposed algorithm through numerical experiments, which indicate that the number of samples per signal can be reduced by harnessing the joint sparsity pattern of multiple signals using atomic norm minimization. 

The rest of the paper is organized as below. Section~\ref{backgrounds} describes the problem formulation. Section~\ref{proposal} formulates the atomic norm minimization problem for joint sparse recovery. Section~\ref{numerical} gives numerical experiments and we conclude in Section~\ref{concl}. Throughout the paper, we use bold letters to denote matrices and vectors, and unbolded letters to denote scalars. The transpose is denoted by $(\cdot)^T$, the complex conjugate is denoted by $(\cdot)^*$, and the trace is denoted by $\trace(\cdot)$.
   
\section{Problem Formulation and Backgrounds} \label{backgrounds}
Let $\bx=[x_1,\ldots,x_n]^T\in\mathbb{C}^n$ be a spectrally-sparse signal with $r$ distinct frequency components, which can be written as
\begin{equation}\label{single}
\bx= \sum_{k=1}^r c_{k} \ba(f_k) \triangleq \bV  \bc,
\end{equation}
where each atom $\ba(f)$ is defined as
\begin{equation}
\ba(f ) = \frac{1}{\sqrt{n}}\left[1, e^{j2\pi f}, \ldots, e^{j2\pi f(n-1)}\right]^T , \quad f \in [0,1).
\end{equation}
The Vandermonde matrix $\bV$ is given as $\bV= [\ba(f_1),\ldots, \ba(f_r)]\in\mathbb{C}^{n\times r}$, and the coefficient vector $\bc=[c_1,\ldots,c_r]^T\in\mathbb{C}^r$. The set of frequencies $\mathcal{F}=\{f_k\}_{k=1}^r$ can lie anywhere on the unit circle, such that $f_k$ is continuously valued in $[0,1)$. 

In an MMV model, we consider $L$ signals, stacked in a matrix, $\bX=[\bx_1,\ldots,\bx_L]$, where each $\bx_l\in\mathbb{C}^{n}$, $l=1,\ldots, L$, shares the same set of frequencies and is composed of the form of \eqref{single} as
\begin{equation}
\bx_l = \sum_{k=1}^r c_{k,l} \ba(f_k) = \bV \bc_l,
\end{equation}
with $\bc_l = [c_{1,l},\ldots,c_{r,l}]^T$. Hence $\bX$ can be expressed as
\begin{equation}\label{data_rep}
 \bX = \bV\bC,
 \end{equation}
where $ \bC = \begin{bmatrix}
\bc_1 & \cdots & \bc_L 
\end{bmatrix}  \in\mathbb{C}^{r \times L}$. Assume we observe the \textit{same} subset of entries of each $\bx_l$, and  denote the location set of observed entries by $\Omega$, with $m=|\Omega|$. When there is no noise, the observations can be given as 
$$\bZ_{\Omega} = \mathcal{P}_{\Omega}(\bX) \in \mathbb{C}^{m\times L},$$ 
where $\mathcal{P}_{\Omega}$ is a projection operator that only preserves the rows of $\bX$ indexed by $\Omega$.\footnote{We remark that, this restriction of fixing the observation pattern across different signals, is actually unnecessary for the proposed algorithm.}

The traditional CS approach assumes that $\bx_l$'s are sparse in an a priori determined DFT basis or DFT frame $\bF\in\mathbb{C}^{n \times d}$ $(d\geq n)$, and they share the same sparsity pattern. Hence, the signal $\bX$ is \textit{modeled} as
\begin{equation} \label{model}
\bX = \bF \bTheta,
\end{equation} 
where $\bTheta=[\btheta_1,\ldots,\btheta_d]^*\in\mathbb{C}^{d\times L}$, and the number of nonzero rows of $\bTheta$ is small. Define the group sparsity $\ell_1$ norm of $\bTheta$ as $\|\bTheta\|_{2,1} =\sum_{i=1}^d \| \btheta_i\|_2$. A convex optimization algorithm \cite{tropp2006algorithms} that motivates group sparsity can be posed to solve the MMV model as
\begin{equation} \label{cs_mmv}
\hat{\bTheta}=\argmin_{\bTheta}\; \|\bTheta\|_{2,1} \quad \mbox{s.t.}\quad \bF_{\Omega}\bTheta = \bZ_{\Omega},
\end{equation}
where $\bF_{\Omega}$ is the subsampled DFT basis or DFT frame on $\Omega$. The signal then is recovered as $\hat{\bX}=\bF \hat{\bTheta}$. However, when the frequencies $\mathcal{F}$ are off-the-grid, the model \eqref{model} becomes highly inaccurate due to spectral leakage along the Dirichlet kernel, making \eqref{cs_mmv} degrades significantly in performance. We will compare against this conventional approach with our proposed algorithm in Section~\ref{numerical}.

\section{Atomic Norm Minimization For MMV Models} \label{proposal}
 In this section we develop the atomic norm minimization algorithm for solving the MMV model with spectrally-sparse signals. We first define an atom as
 \begin{equation}
\bA(f,\bb ) = \ba(f)\bb^* ,
\end{equation}
where $f \in [0,1)$, $\bb\in\mathbb{R}^{L}$ satisfying $\|\bb\|_2 =1$, and the set of atoms as $\cA=\{\bA(f,\bb ) | f \in [0,1), \|\bb\|_2 =1\}$. Define
\begin{align*}
\| \bX\|_{\cA,0} &  = \inf_r \left\{   \bX = \sum_{k=1}^r c_k \bA(f_k,\bb_k), c_k \geq 0 \right\}.
\end{align*}
as the smallest number of atoms to describe $\bX$. Our goal is thus to minimize $\| \bX\|_{\cA,0}$ that satisfies the observation, given as
\begin{equation} \label{atomic_zero}
\min \|\bX\|_{\cA,0} \quad \mbox{s.t.} \quad \bZ_\Omega = \mathcal{P}_{\Omega}(\bX) .
\end{equation}
It is easy to show that $\| \bX\|_{\cA,0}$ can be represented equivalently as
\begin{align*}
\| \bX\|_{\cA,0} &  =  \inf_{\bu ,\bW } \left\{  \rank(\toep(\bu))  \Big| \begin{bmatrix}
\toep(\bu) & \bX \\
\bX^* & \bW \end{bmatrix} \succ \bf{0} \right\},
\end{align*}
where $\toep(\bu)$ is the Toeplitz matrix generated by $\bu$. Hence \eqref{atomic_zero} is NP-hard. We will alternatively consider the convex relaxation of $\| \bX\|_{\cA,0}$, defining the atomic norm \cite{chandrasekaran2012convex} of $\bX$ as
\begin{align}
\| \bX\|_{\cA} &= \inf \left\{ t>0: \; \bX\in t\; \mbox{conv}(\cA) \right\} \nonumber \\
&= \inf \left\{ \sum_k c_k \Big| \bX = \sum_k c_k \bA(f_k,\bb_k), c_k \geq 0 \right\}. \label{atomic_def}
\end{align}
The atomic norm of a single vector $\bx_l$ defined in \cite{TangBhaskarShahRecht2012} becomes a special case of \eqref{atomic_def} for $L=1$. We propose to solve the following atomic norm minimization algorithm:
\begin{equation}\label{primal}
\hat{\bX}=\argmin_\bX \|\bX\|_\cA \quad \mbox{s.t.} \quad  \bZ_\Omega = \mathcal{P}_{\Omega}(\bX).
\end{equation}


\subsection{Semidefinite Program Characterization}

We now prove the following equivalent semidefinite program (SDP) characterization of $\|\bX\|_\cA$, showing that \eqref{primal} can be solved efficiently using off-the-shelf SDP solvers.

\begin{theorem}\label{atomic-sdp} The atomic norm $\|\bX\|_\cA$ can be written equivalently as
\begin{align*}
 \| \bX\|_\cA = \inf_{\bu\in\mathbb{C}^n,\bW\in\mathbb{C}^{L\times L}} & \Big\{ \frac{1}{2}\trace(\toep(\bu)) + \frac{1}{2}\trace(\bW) \Big|  \\
 & \begin{bmatrix}
\toep(\bu) & \bX \\
\bX^* & \bW \end{bmatrix} \succ \bf{0} \Big\}.
\end{align*}
\end{theorem}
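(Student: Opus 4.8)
The plan is to establish both inequalities "$\le$" and "$\ge$" between $\|\bX\|_\cA$ and the value of the semidefinite program, which I will write as $\mathrm{SDP}(\bX)$. The template is the one used by Tang et al.\ for the single-vector ($L=1$) case; the MMV structure enters only through the fact that each atom now carries a unit-norm coefficient vector $\bb_k$ rather than a unit-modulus scalar, so the lower-right block $\bW$ will play the role of a Gram matrix of those vectors. Throughout I read the constraint as the closed condition $\succeq\mathbf{0}$, which does not change the infimum and is what both constructions below require.

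For the direction $\mathrm{SDP}(\bX)\le\|\bX\|_\cA$ I would start from an arbitrary atomic decomposition $\bX=\sum_k c_k\,\ba(f_k)\bb_k^*$ with $c_k\ge 0$ and $\|\bb_k\|_2=1$, and exhibit an explicit feasible point. The natural choice is $\toep(\bu)=\sum_k c_k\,\ba(f_k)\ba(f_k)^*$, which is Toeplitz and PSD since each $\ba(f_k)\ba(f_k)^*$ is, together with $\bW=\sum_k c_k\,\bb_k\bb_k^*$. The block matrix then factors as a nonnegative combination $\sum_k c_k\,\bv_k\bv_k^*\succeq\mathbf{0}$ of rank-one PSD matrices with $\bv_k=[\ba(f_k)^T,\bb_k^T]^T$, so feasibility is immediate. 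Because $\|\ba(f_k)\|_2^2=1$ and $\|\bb_k\|_2^2=1$, the objective evaluates to $\tfrac12\sum_k c_k+\tfrac12\sum_k c_k=\sum_k c_k$. Taking the infimum over all decompositions yields $\mathrm{SDP}(\bX)\le\|\bX\|_\cA$.

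The reverse inequality $\mathrm{SDP}(\bX)\ge\|\bX\|_\cA$ is the substantive part. Given any feasible $(\bu,\bW)$, the key tool is the Vandermonde (Carath\'eodory--Fej\'er) decomposition of the PSD Toeplitz matrix, $\toep(\bu)=\sum_{k=1}^{\rho}d_k\,\ba(f_k)\ba(f_k)^*=\bV\bD\bV^*$ with $\rho=\rank(\toep(\bu))$, distinct $f_k$, and $d_k>0$, so that $\trace(\toep(\bu))=\sum_k d_k$. Since the block matrix is PSD, the range of $\bX$ is contained in the range of $\toep(\bu)$, which is the span of $\{\ba(f_k)\}$; hence $\bX=\sum_k\ba(f_k)\boldsymbol{g}_k^*$ for some $\boldsymbol{g}_k\in\mathbb{C}^L$, and this already exhibits an atomic decomposition giving $\|\bX\|_\cA\le\sum_k\|\boldsymbol{g}_k\|_2$. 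The generalized Schur-complement condition then gives $\bW\succeq\bX^*\toep(\bu)^\dagger\bX$; substituting the factorizations and using $\bV^\dagger\bV=\bI$ simplifies $\bX^*\toep(\bu)^\dagger\bX=\sum_k\|\boldsymbol{g}_k\|_2^2/d_k$, whence $\trace(\bW)\ge\sum_k\|\boldsymbol{g}_k\|_2^2/d_k$. Combining, $\tfrac12\trace(\toep(\bu))+\tfrac12\trace(\bW)\ge\tfrac12\sum_k\bigl(d_k+\|\boldsymbol{g}_k\|_2^2/d_k\bigr)$, and the elementary AM--GM inequality bounds each summand below by $\|\boldsymbol{g}_k\|_2$, so the objective is at least $\sum_k\|\boldsymbol{g}_k\|_2\ge\|\bX\|_\cA$.

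I expect the main obstacle to be the careful handling of the rank-deficient case in this second direction: when $\toep(\bu)$ is singular I must use the pseudoinverse $\toep(\bu)^\dagger$ and justify both the range-containment step (so that $\bX$ genuinely decomposes over the $\ba(f_k)$ appearing in the Vandermonde factorization) and the simplification of $\bX^*\toep(\bu)^\dagger\bX$, which rests on the full-column-rank Vandermonde factor $\bV$ satisfying $\bV^\dagger\bV=\bI$. The invocation of the Vandermonde decomposition is the one nonelementary ingredient; everything downstream is a Schur-complement identity followed by a term-by-term AM--GM bound, which is what links the $\tfrac12\trace$ objective to the sum $\sum_k\|\boldsymbol{g}_k\|_2$ defining an atomic decomposition.
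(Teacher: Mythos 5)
Your proof is correct and follows the same overall skeleton as the paper's: the ``$\le$'' direction via the explicit feasible point $\toep(\bu)=\sum_k c_k\ba(f_k)\ba(f_k)^*$, $\bW=\sum_k c_k\bb_k\bb_k^*$, and the ``$\ge$'' direction via the Vandermonde (Carath\'eodory) decomposition of the PSD Toeplitz block followed by a Schur-complement bound and AM--GM. The one place you genuinely diverge is in how the Schur complement is deployed. The paper asserts a factorization $\bW=\bB^*\bE\bB$ with $\bE$ PSD, reduces the block matrix to $\begin{bmatrix}\bD & \bI\\ \bI & \bE\end{bmatrix}\succeq\mathbf{0}$, concludes $\bE\succeq\bD^{-1}$, and then chains the \emph{aggregate} inequalities $\tfrac12\trace(\bD)+\tfrac12\trace(\bW)\ge\sqrt{\trace(\bD)\trace(\bW)}\ge\sqrt{(\sum_i d_i)(\sum_i d_i^{-1}\|\bb_i\|^2)}\ge\sum_i\|\bb_i\|$, the last step by Cauchy--Schwarz. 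You instead invoke the generalized Schur complement $\bW\succeq\bX^*\toep(\bu)^\dagger\bX$ directly and apply AM--GM \emph{termwise} to $\tfrac12(d_k+\|\boldsymbol{g}_k\|_2^2/d_k)$. Your route is arguably cleaner and more robust: the paper's intermediate factorization $\bW=\bB^*\bE\bB$ is not justified in general (e.g.\ it forces $\bW$ into the row space of $\bB$, which the feasibility constraint does not require), whereas the pseudoinverse Schur complement is the standard correct statement and your simplification via $\bV^\dagger\bV=\bI$ handles the rank-deficient case explicitly. Two cosmetic points: you equate the matrix $\bX^*\toep(\bu)^\dagger\bX$ with the scalar $\sum_k\|\boldsymbol{g}_k\|_2^2/d_k$ where you mean its trace (the following clause shows you intend the trace), and your remark that the constraint should be read as $\succeq\mathbf{0}$ is well taken --- the paper writes $\succ\mathbf{0}$ yet its own feasible point in the first direction is only rank $r$, so the closed form is indeed what is meant.
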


\begin{proof} Denote the value of the right hand side as $\|\bX\|_\cT$. Suppose that $\bX =\sum_{k=1}^r c_k \ba(f_k)\bb_k^*$, there exists a vector $\bu$ such that
$$ \toep(\bu) = \sum_{k=1}^r c_k \ba(f_k) \ba(f_k)^*, $$
by the Vandermonde decomposition lemma \cite{Caratheodory}. It is obvious that
\begin{align*} 
&\quad \begin{bmatrix}
\toep(\bu) & \bX \\
\bX^* & \sum_{k=1}^r c_k \bb_k \bb_k^* \end{bmatrix} \\
&=  \begin{bmatrix}
\sum_{k=1}^r c_k \ba(f_k) \ba(f_k)^* & \sum_{k=1}^r c_k\ba(f_k) \bb_k^* \\
 \sum_{k=1}^r  c_k \bb_k\ba(f_k)^* & \sum_{k=1}^r c_k \bb_k \bb_k^* \end{bmatrix} \\
 & = \sum_{k=1}^r  c_k  \begin{bmatrix}
  \ba(f_k)  \\
  \bb_k    \end{bmatrix} \begin{bmatrix}
  \ba(f_k)^* &  \bb_k^*    \end{bmatrix} \succ 0,
 \end{align*}
 and 
$$ \frac{1}{2}\trace(\toep(\bu)) + \frac{1}{2}\trace(\bW) = \sum_{k=1}^r c_k = \| \bX\|_\cA,$$
therefore $\|\bX\|_{\cT}\leq \|\bX\|_\cA$. On the other hand, suppose that for any $\bu$ and $\bW$ that satisfy
$$ \begin{bmatrix}
\toep(\bu) & \bX \\
\bX^* & \bW \end{bmatrix} \succ \bf{0}, $$
with $\toep(\bu) = \bV\bD\bV^*$, $\bD=\diag(d_i)$, $d_i> 0$. It follows that $\bX$ is in the range of $\bV$, hence $\bX=\bV\bB$ with the columns of $\bB^T$ given by $\bb_i$. Since $\bW$ is also PSD, $\bW$ can be written as $\bW = \bB^*\bE\bB$ where $\bE$ is also PSD. We now have
$$ \begin{bmatrix}
\toep(\bu) & \bX \\
\bX^* & \bW \end{bmatrix} =\begin{bmatrix}
\bV &   \\
  & \bB^* \end{bmatrix} \begin{bmatrix}
\bD & \bI \\
\bI & \bE \end{bmatrix} \begin{bmatrix}
\bV^* &  \\
 & \bB \end{bmatrix} \succ \bf{0}, $$
which yields 
$$ \begin{bmatrix}
\bD & \bI \\
\bI & \bE \end{bmatrix} \succ \bf{0} $$
and $\bE\succ \bD^{-1}$ by the Schur complement lemma. Now observe
\begin{align*}
\trace(\bW)& = \trace(\bB^*\bE\bB)\geq  \trace(\bB^*\bD^{-1}\bB) \\
&= \trace(\bD^{-1}\bB\bB^*) = \sum_i d_i^{-1} \| \bb_i\|^2.
\end{align*}
Therefore,
\begin{align*}
&\quad\frac{1}{2}\trace(\toep(\bu))+\frac{1}{2} \trace(\bW)  =\frac{1}{2} \trace(\bD)+\frac{1}{2} \trace(\bW) \\
& \geq \sqrt{ \trace(\bD)\cdot \trace(\bW)} \\
& \geq \sqrt{\left(\sum_i d_i \right) \left( \sum_i d_i^{-1} \| \bb_i\|^2 \right) }  \\
& \geq \sum \|\bb_i\| \geq \|\bX \|_\cA,
\end{align*}
which gives $\|\bX\|_\cT \geq \|\bX\|_\cA$. Therefore, $\|\bX\|_\cT = \|\bX\|_\cA$.
\end{proof}
From Theorem~\ref{atomic-sdp}, we can now equivalently write \eqref{primal} as the following SDP:
\begin{align}
\hat{\bX}&= \argmin_{\bX}\inf_{\bu,\bW} \;  \frac{1}{2}\trace(\toep(\bu)) + \frac{1}{2}\trace(\bW) \label{primal-sdp}\\
& \mbox{s.t.} \;  \begin{bmatrix}
\toep(\bu) & \bX \\
\bX^* & \bW \end{bmatrix} \succ \mathbf{0} , \; \bZ_{\Omega} = \mathcal{P}_{\Omega}(\bX). \nonumber
\end{align}

\subsection{Dual Certification}
An important question is when the algorithm \eqref{primal} admits perfect recovery. To this end, we study the dual problem of \eqref{primal}. Denote the optimal solution of \eqref{primal} as $\bX^{\star}$. Let $\bY\in\mathbb{C}^{n\times L}$, define $\langle \bY,\bX \rangle = \trace(\bX^*\bY)$, and $\langle \bY,\bX \rangle_{\mathbb{R}}=\mbox{Re}(\langle \bY,\bX \rangle)$. The dual norm of $\|\bX\|_\cA$ can be defined as
\begin{align*}
\|\bY\|_{\cA}^* &=\sup_{\|\bX\|_\cA\leq 1} \langle \bY,\bX\rangle_{\mathbb{R}} \\
&  = \sup_{f\in[0,1),\|\bb\|_2=1}  \langle \bY, \ba(f) \bb^* \rangle_{\mathbb{R}} \\
&  = \sup_{f\in[0,1),\|\bb\|_2=1} \left| \langle \bb, \bY^*\ba(f)   \rangle \right| \\
& = \sup_{f\in[0,1)} \|\bY^*\ba(f)  \|_2 \triangleq \sup_{f\in[0,1)} \| \bQ(f)\|_2,
\end{align*}
where $\bQ(f) = \bY^*\ba(f)$ is a length-$L$ vector with each entry a polynomial in $f$. The dual problem of \eqref{primal} can thus be written as
\begin{equation} \label{dual}
\max_{\bY} \; \langle \bY_\Omega, \bX_{\Omega}^{\star} \rangle_{\mathbb{R}} \quad \mbox{s.t.} \quad \|\bY\|_{\cA}^* \leq 1, \bY_{\Omega^c} = 0,
\end{equation}
where $\bY_{\Omega^c}$ denotes the projection of $\bY$ on the rows denoted by the location set $\Omega^c=\{1,\ldots, n\} \backslash \Omega$.
Let $(\bX,\bY)$ be primal-dual feasible to \eqref{primal} and \eqref{dual}, we have $\langle \bY,\bX\rangle_{\mathbb{R}}=\langle \bY,\bX^{\star}\rangle_{\mathbb{R}}$. Strong duality holds since Slater's condition holds, and it implies that the solutions of \eqref{primal} and \eqref{dual} equal if and only if $\bY$ is dual optimal and $\bX$ is primal optimal \cite{boyd2004convex}. Using strong duality we can obtain a dual certification to the optimality of the solution of \eqref{primal}.

\begin{prop} \label{dual_certificate}The solution of \eqref{primal} $\hat{\bX}=\bX^{\star}$ is the unique optimizer if there exists $\bQ(f)= \bY^*\ba(f) $ 
such that
\begin{equation} \label{conditions}
\begin{cases}
(C1):\quad \bQ(f_k) = \bb_k, & \forall f_k \in \mathcal{F}, \\
(C2):\quad \| \bQ(f) \|_2 < 1, &\forall f\notin \mathcal{F}, \\
(C3):\quad \bY_{\Omega^c} = 0.&
\end{cases}
\end{equation}

\end{prop}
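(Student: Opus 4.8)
The plan is to read \eqref{conditions} as a dual certificate and invoke strong duality, following the standard two-part template: first use $\bY$ to prove that $\bX^{\star}$ is optimal, then sharpen the comparison to obtain uniqueness.

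\emph{Dual feasibility and optimality.} I would first verify that the $\bY$ underlying $\bQ(f)=\bY^*\ba(f)$ is feasible for \eqref{dual}. Condition $(C3)$ gives $\bY_{\Omega^c}=0$ directly, and since $\|\bY\|_\cA^*=\sup_f\|\bQ(f)\|_2$, combining $(C1)$ (which forces $\|\bQ(f_k)\|_2=\|\bb_k\|_2=1$) with the strict bound $(C2)$ yields $\|\bY\|_\cA^*=1\le 1$. Next, writing an optimal decomposition $\bX^{\star}=\sum_{k=1}^r c_k\ba(f_k)\bb_k^*$ with $c_k>0$, $\|\bb_k\|_2=1$, so that $\|\bX^{\star}\|_\cA=\sum_k c_k$, I would compute $\langle\bY,\bX^{\star}\rangle=\sum_k c_k\langle\bb_k,\bQ(f_k)\rangle=\sum_k c_k\|\bb_k\|_2^2=\sum_k c_k$ using $(C1)$. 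By $(C3)$ this equals $\langle\bY_\Omega,\bX_\Omega^{\star}\rangle$, so the dual objective at $\bY$ equals $\|\bX^{\star}\|_\cA$. Finally, for every primal-feasible $\bX$ the generalized H\"older inequality together with $\bY_{\Omega^c}=0$ gives $\langle\bY_\Omega,\bX_\Omega^{\star}\rangle_{\mathbb{R}}=\langle\bY,\bX\rangle_{\mathbb{R}}\le\|\bY\|_\cA^*\|\bX\|_\cA\le\|\bX\|_\cA$; since this lower bound is attained at $\bX^{\star}$, both $\bX^{\star}$ and $\bY$ are optimal.

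\emph{Uniqueness.} This is the delicate step. Let $\tilde\bX=\sum_j\tilde c_j\ba(\tilde f_j)\tilde\bb_j^*$ be any optimizer, with $\tilde c_j>0$, $\|\tilde\bb_j\|_2=1$ and $\|\tilde\bX\|_\cA=\sum_j\tilde c_j$. Feasibility and $(C3)$ give $\langle\bY,\tilde\bX\rangle_{\mathbb{R}}=\langle\bY_\Omega,\bX_\Omega^{\star}\rangle_{\mathbb{R}}=\|\bX^{\star}\|_\cA=\|\tilde\bX\|_\cA$, so the chain
\[
\|\tilde\bX\|_\cA=\sum_j\tilde c_j\,\mbox{Re}\langle\tilde\bb_j,\bQ(\tilde f_j)\rangle\le\sum_j\tilde c_j\|\bQ(\tilde f_j)\|_2\le\sum_j\tilde c_j=\|\tilde\bX\|_\cA
\]
must hold with equality throughout. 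Equality in the second inequality forces $\|\bQ(\tilde f_j)\|_2=1$, hence $\tilde f_j\in\cF$ by the strict bound $(C2)$; equality in the Cauchy--Schwarz step forces $\tilde\bb_j$ to be aligned with $\bQ(\tilde f_j)=\bb_{k(j)}$ (by $(C1)$); and equality in $\mbox{Re}(\cdot)\le|\cdot|$ fixes the phase, so $\tilde\bb_j=\bb_{k(j)}$. Collecting atoms at a common frequency, every optimizer therefore has the form $\tilde\bX=\sum_k\hat c_k\ba(f_k)\bb_k^*$ with the same frequencies and directions as $\bX^{\star}$. Writing $\tilde\bX-\bX^{\star}=\bV\diag(\hat c_k-c_k)\bB$ with $\bB$ the matrix of rows $\bb_k^*$, the constraint $\mathcal{P}_{\Omega}(\tilde\bX-\bX^{\star})=0$ becomes $\bV_\Omega\diag(\hat c_k-c_k)\bB=0$, and full column rank of $\bV_\Omega$ (injectivity of the sampling on $\cF$-supported signals) gives $\hat c_k=c_k$, i.e. $\tilde\bX=\bX^{\star}$.

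The main obstacle is precisely this uniqueness argument. Two features require care beyond the single-vector case: extracting the alignment condition $\tilde\bb_j=\bb_{k(j)}$ through Cauchy--Schwarz on the vectors $\tilde\bb_j$ and $\bQ(\tilde f_j)$ (the MMV analogue of a scalar phase match), and then pinning down the coefficients, which rests on the injectivity of $\mathcal{P}_\Omega$ restricted to signals supported on $\cF$. The support-and-direction identification follows cleanly from the strictness in $(C2)$ and the equality analysis, so the substantive content is the final rank condition that converts agreement on $\Omega$ into global equality.
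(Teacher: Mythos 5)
Your proof is correct and follows essentially the same route as the paper: certify optimality via the dual polynomial together with duality, then use the strict bound $(C2)$ to rule out atoms at frequencies outside $\mathcal{F}$ and linear independence to pin down the coefficients. The only substantive difference is in the uniqueness step, where the paper dismisses the same-support case as ``trivial\ldots since the set of atoms with frequencies in $\mathcal{F}$ is independent,'' whereas you make explicit (via the Cauchy--Schwarz equality analysis and the final rank argument) that what is really needed is injectivity of $\mathcal{P}_\Omega$ on $\mathcal{F}$-supported signals, i.e.\ full column rank of $\bV_\Omega$ --- a condition that neither you nor the paper actually derives from $(C1)$--$(C3)$.
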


\begin{proof} First, any $\bY$ satisfying \eqref{conditions} is dual feasible. We have
\begin{align*}
\|\bX^{\star}\|_{\cA} & \geq\|\bX^{\star}\|_{\cA}\|\bY\|_{\cA}^* \\
&\geq \langle \bY, \bX^{\star} \rangle_{\mathbb{R}}  =\langle \bY,  \sum_{k=1}^r c_k \ba(f_k)\bb_k^* \rangle_{\mathbb{R}} \\
& = \sum_{k=1}^r \mbox{Re}\left( c_k \langle \bY, \ba(f_k)\bb_k^* \rangle \right) \\
& = \sum_{k=1}^r \mbox{Re} \left( c_k \langle \bb_k, \bQ(f_k) \rangle \right) \\
& =  \sum_{k=1}^r \mbox{Re} \left( c_k \langle \bb_k, \bb_k \rangle \right)  = \sum_{k=1}^r c_k \geq \|\bX^{\star}\|_{\cA} .
\end{align*}
Hence $\langle \bY, \bX^{\star} \rangle_{\mathbb{R}} =\|\bX^{\star}\|_{\cA}$. By strong duality we have $\bX^{\star}$ is primal optimal and $\bY$ is dual optimal.

For uniqueness, suppose $\hat{\bX}$ is another optimal solution which has support outside $\mathcal{F}$. It is trivial to justify if $\hat{\bX}$ and $\bX^{\star}$ have the same support, they must coincide since the set of atoms with frequencies in $\mathcal{F}$ is independent. Let $\hat{\bX}=\sum_k \hat{c}_k \ba(\hat{f}_k)\hat{\bb}_k^*$. We then have 
\begin{align*}
\langle\bY, \hat{\bX}  \rangle_{\mathbb{R}}  & = \sum_{\hat{f}_k\in\mathcal{F}}\mbox{Re}  \left( \hat{c}_k \langle \hat{\bb}_k, \bQ(\hat{f}_k) \rangle \right) +  \sum_{\hat{f}_l\notin\mathcal{F}}\mbox{Re}  \left( \hat{c}_l \langle \hat{\bb}_l, \bQ(\hat{f}_l) \rangle \right) \\
& <  \sum_{\hat{f}_k\in\mathcal{F}} \hat{c}_k + \sum_{\hat{f}_l \notin\mathcal{F}} \hat{c}_l  = \|\hat{\bX}\|_{\cA},
\end{align*}
which contradicts strong duality. Therefore the optimal solution of \eqref{primal} is unique.
\end{proof}
Proposition~\ref{dual_certificate} offers a way to certify the optimality of \eqref{primal} as long as we can find a dual polynomial $\bQ(f)$ that satisfies \eqref{conditions}. This is left for future work.

\begin{figure*}
\centering
\begin{tabular}{cc}
\includegraphics[width=0.44\textwidth]{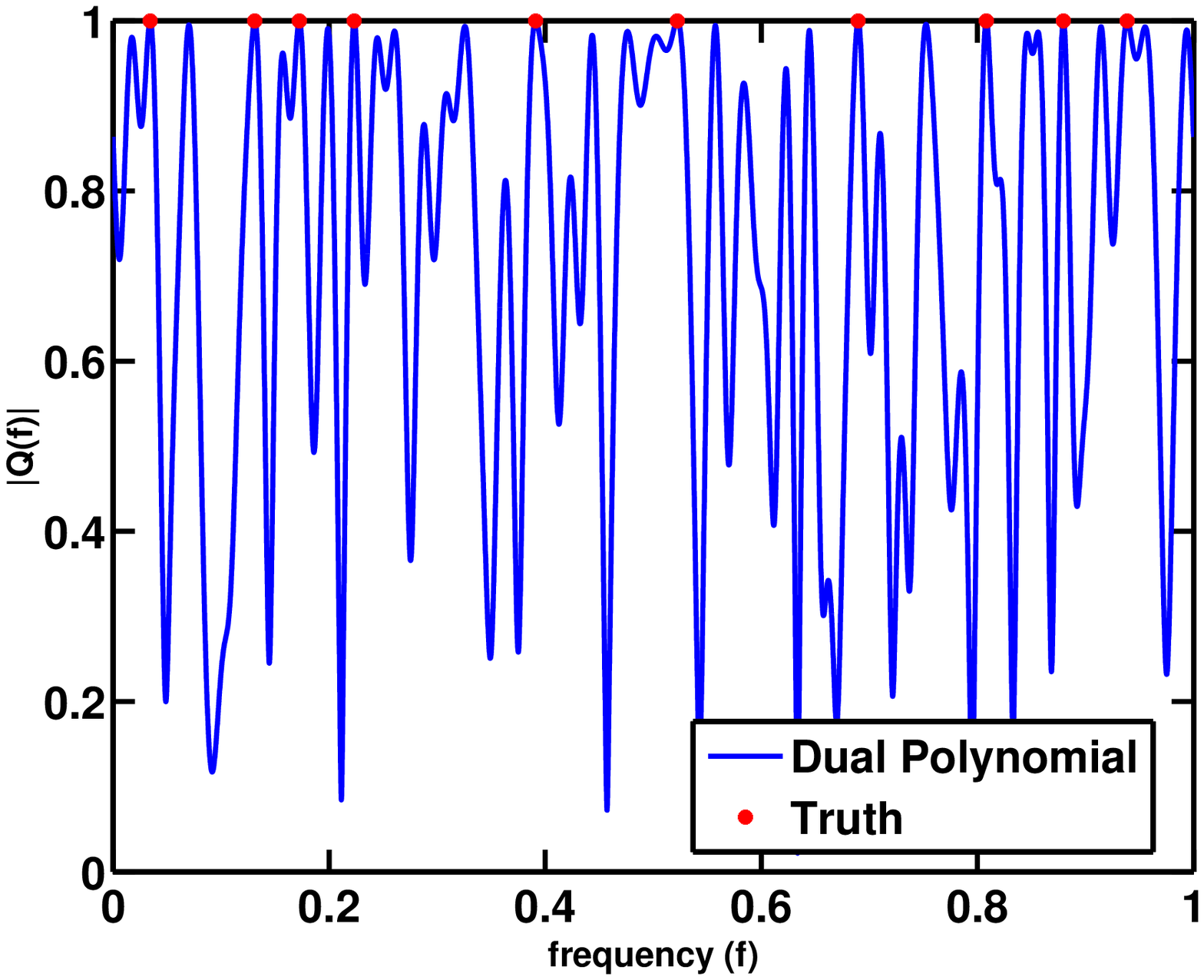} & \includegraphics[width=0.44\textwidth]{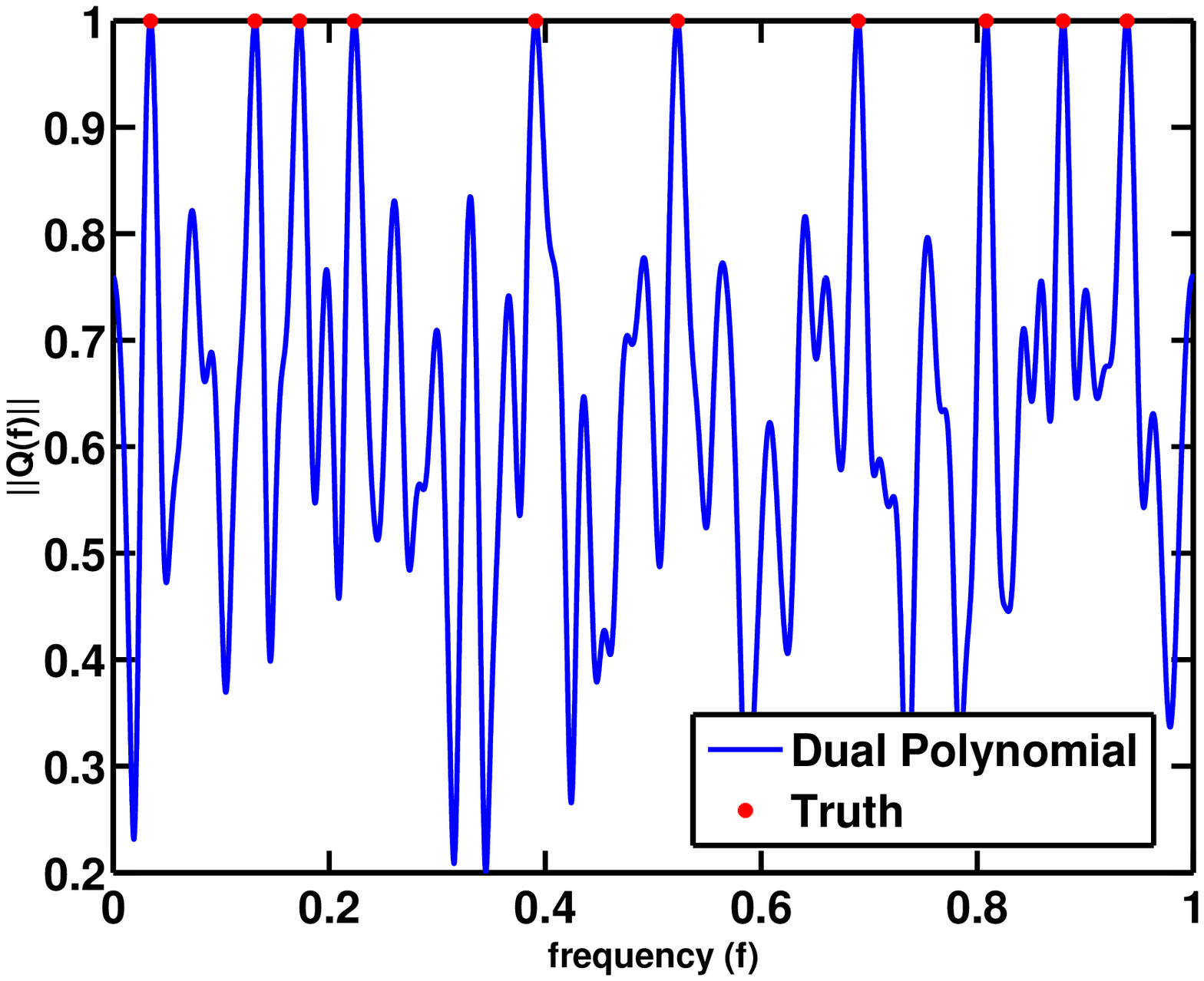} \\
(a) $  L=1$ &  (b) $  L=3$ 
\end{tabular}
\caption{The reconstructed dual polynomial for a randomly generated spectrally-sparse signal with $r=10$, $n=64$, and $m=32$: (a) $L=1$, (b) $L=3$.} \label{dualpoly}
\end{figure*}
\section{Numerical Experiments} \label{numerical}

In this section, we evaluation the performance of the proposed algorithm \eqref{primal}, showing that the number of measurements per signal may be reduced as we increase the number of signals $L$ for achieving the same performance.

\subsection{Phase transition when varying the number of signals}
Let $n=64$ and $m=32$. In each Monte Carlo experiment, we randomly generate a spectrally-sparse signal with $r$ frequencies randomly located in $[0,1)$ that satisfies a separation condition $\Delta =\min_{k\neq l} |f_k-f_l| \geq 1/n$. This separation condition is slightly weaker than the condition asserted in \cite{TangBhaskarShahRecht2012} to guarantee the success of \eqref{primal} with high probability for $L=1$. For each frequency component, we randomly generate the amplitude for each signal from the standard complex Gaussian distribution $\mathcal{CN}(0,1)$. We run \eqref{primal-sdp} using CVX \cite{grant2008cvx} and calculate the reconstruction normalized mean squared error (NMSE) as $\|\hat{\bX}-\bX^{\star} \|_F/\|\bX^{\star}\|_F$ where $\bX^{\star}$ is the ground truth. The experiment is claimed successful if $\mbox{NMSE}\leq 10^{-5}$. For each pair of $r$ and $L$, we run a total of $50$ Monte Carlo experiments and output the average success rate. Fig.~\ref{fixed_m} shows the success rate of reconstruction versus the sparsity level $r$ for $L=1$, $2$, and $3$ respectively. As we increase $L$, the success rate becomes higher for the same sparsity level.
\begin{figure}[h]
\centering
\includegraphics[width=0.45\textwidth]{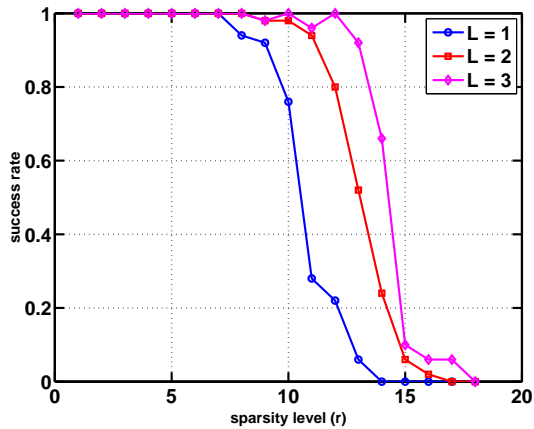}
\caption{Success rate of reconstruction versus the sparsity level $r$ for $L=1,2,3$ when $n=64$, $m=32$ and the frequencies are generated satisfying a separation condition $\Delta\geq1/n$.} \label{fixed_m}
\end{figure}

Fig.~\ref{dualpoly} shows the reconstructed dual polynomial for a randomly generated spectrally-sparse signal with $r=10$ when $L=1$ and $L=3$ respectively. It can be seen that although the algorithm achieves perfect recovery for both cases, the reconstructed dual polynomial has a much better localization property when multiple signals are present.

\begin{figure}[htp]
\centering
\begin{tabular}{cc}
\hspace{-0.1in}\includegraphics[height=1.8in, width=0.25\textwidth]{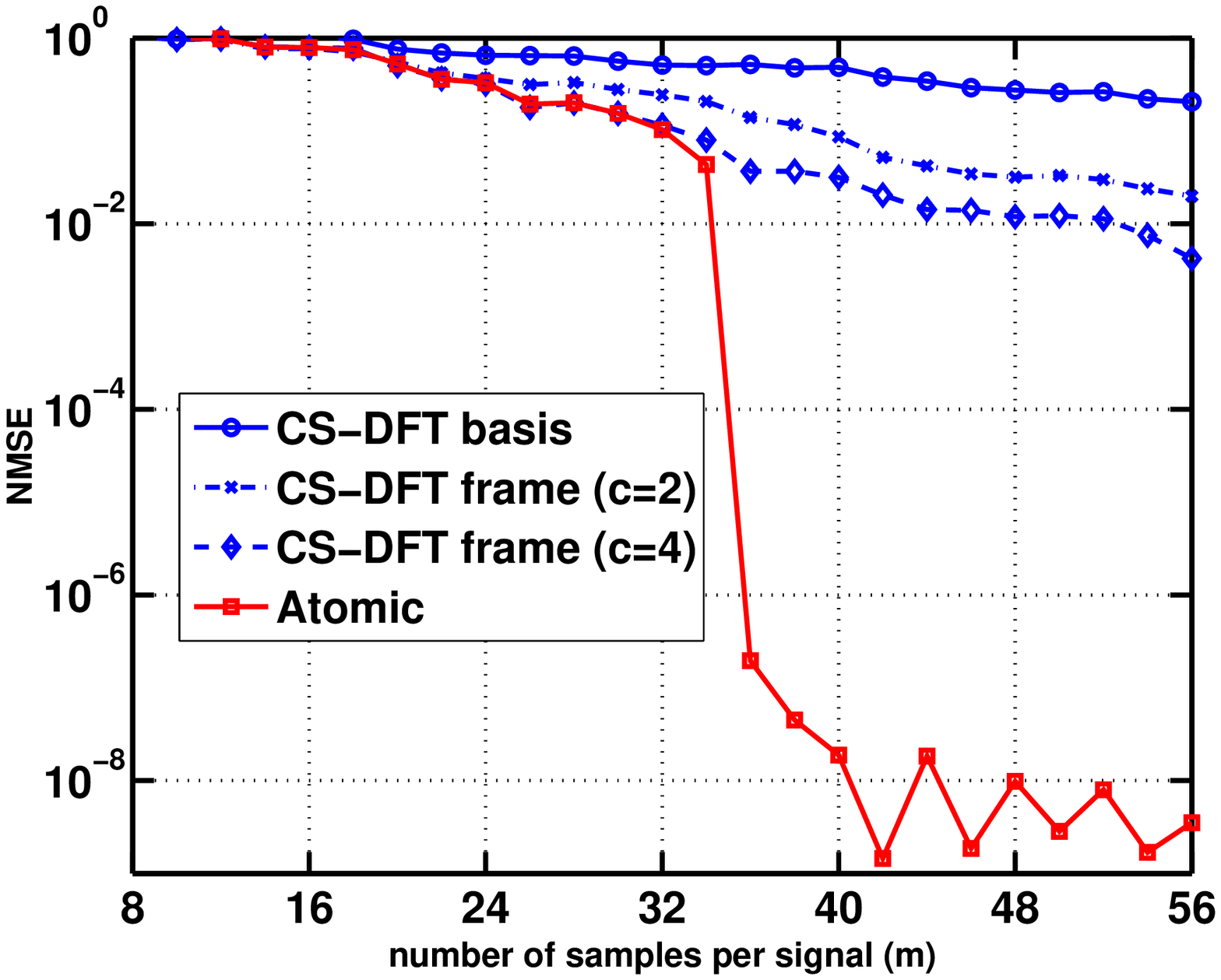} & \hspace{-0.15in}
\includegraphics[height=1.8in,width=0.25\textwidth]{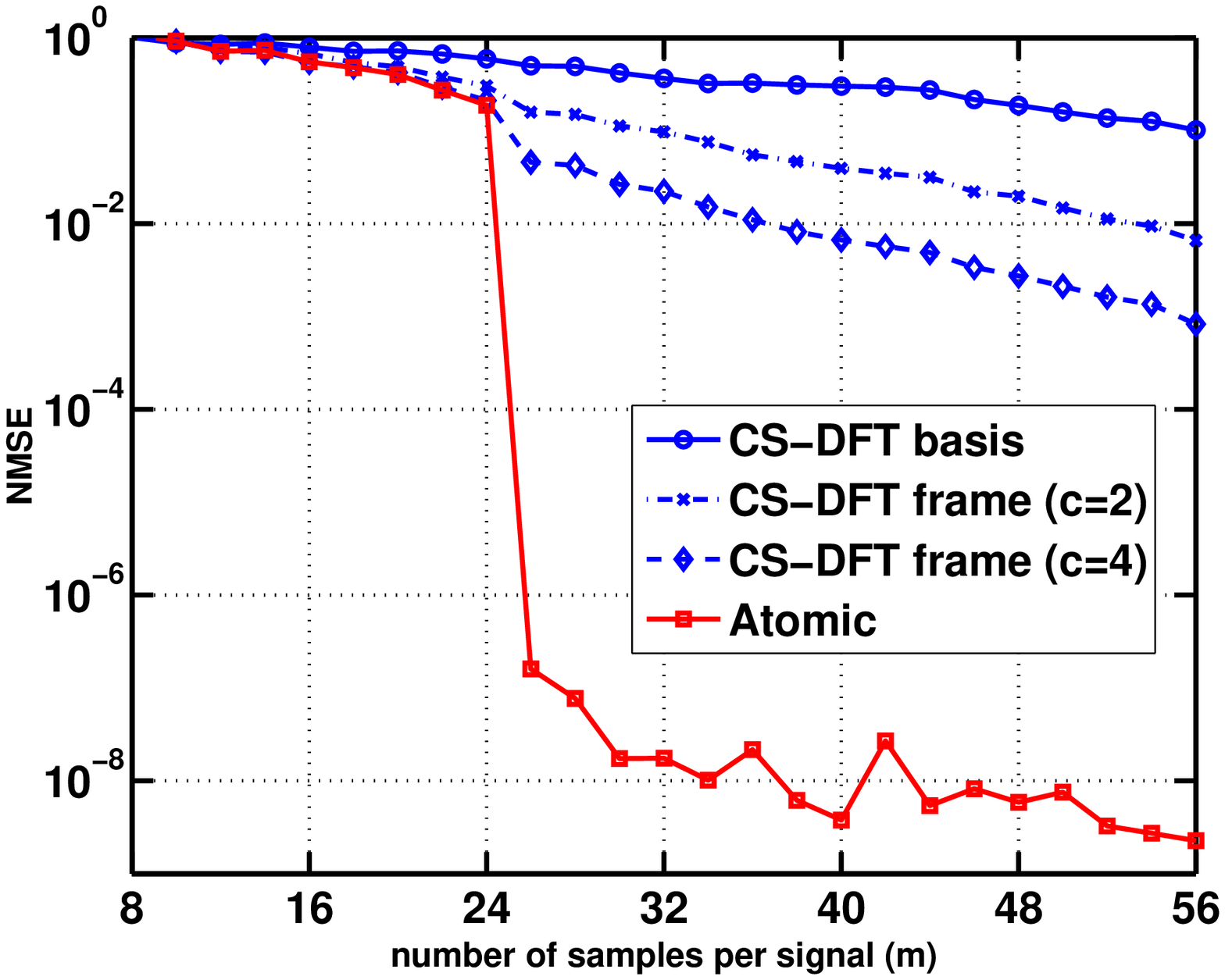} \\
(a) $L=1$ &   (b) $L=3$
\end{tabular}
\caption{The NMSE of reconstruction versus the number of samples per signal $m$ for a randomly generated spectrally-sparse signal satisfying a separation condition $\Delta\geq 1/n$ with $r=8$, $n=64$: (a) $L=1$, (b) $L=3$.} \label{cmp_offgrid}
\end{figure}
 \subsection{Comparison of CS and the proposed algorithm}
Let $n=64$. We randomly generate $r=10$ frequencies on the unit circle $[0,1)$ satisfying a separation condition $\Delta\geq 1/n$ as in the previous setup. Fig.~\ref{cmp_offgrid} shows the NMSE of the reconstructed signal using the CS-MMV algorithm \eqref{cs_mmv} with a DFT basis, a DFT frame with an oversampling factor $c=2$, a DFT frame with an oversampling factor $c=4$, and the proposed atomic norm minimization algorithm \eqref{primal} for (a) $L=1$ and (b) $L=3$. The atomic norm minimization algorithm outperforms \eqref{cs_mmv} even when the reconstruction is not exact at smaller values of $m$. The CS algorithm \eqref{cs_mmv} can never achieve exact recover since a randomly generated frequency is always off the grid, while the recovery of the atomic norm algorithm \eqref{primal} is exact after $m$ exceeds a certain threshold for success. As we increase the number of signals $L$, both CS algorithms and the atomic norm algorithm improve their performance.

\section{Conclusions} \label{concl}
In this paper we study the problem of simultaneously recovering multiple spectrally-sparse signals that are supported on the same frequencies lying arbitrarily on the unit circle. We propose an atomic norm minimization problem, and solve it efficiently via semidefinite programming. Through numerical experiments, we show that the number of samples per signal may be further reduced by harnessing the joint sparsity pattern of multiple signals. The proposed atomic norm minimization algorithm can also be applied when the observation patterns are different across different signals. Future work is to develop theoretical guarantees of the proposed algorithm and examine its performance in the presence of noise. Another interesting direction is to study MMV extensions of the off-the-grid approach in \cite{chen2013spectral,chen2013robust}.


\bibliographystyle{IEEEtran} 
\bibliography{bibfileSparseMatrixPencil_atomic}

\begin{thebibliography}{10}
\providecommand{\url}[1]{#1}
\csname url@samestyle\endcsname
\providecommand{\newblock}{\relax}
\providecommand{\bibinfo}[2]{#2}
\providecommand{\BIBentrySTDinterwordspacing}{\spaceskip=0pt\relax}
\providecommand{\BIBentryALTinterwordstretchfactor}{4}
\providecommand{\BIBentryALTinterwordspacing}{\spaceskip=\fontdimen2\font plus
\BIBentryALTinterwordstretchfactor\fontdimen3\font minus
  \fontdimen4\font\relax}
\providecommand{\BIBforeignlanguage}[2]{{%
\expandafter\ifx\csname l@#1\endcsname\relax
\typeout{** WARNING: IEEEtran.bst: No hyphenation pattern has been}%
\typeout{** loaded for the language `#1'. Using the pattern for}%
\typeout{** the default language instead.}%
\else
\language=\csname l@#1\endcsname
\fi
#2}}
\providecommand{\BIBdecl}{\relax}
\BIBdecl

\bibitem{CandRomTao06}
E.~Candes, J.~Romberg, and T.~Tao, ``Robust uncertainty principles: exact
  signal reconstruction from highly incomplete frequency information,''
  \emph{IEEE Transactions on Information Theory}, vol.~52, no.~2, pp. 489--509,
  Feb. 2006.

\bibitem{Don2006}
D.~Donoho, ``Compressed sensing,'' \emph{IEEE Transactions on Information
  Theory}, vol.~52, no.~4, pp. 1289 --1306, April 2006.

\bibitem{candes2007sparsity}
E.~Candes and J.~Romberg, ``Sparsity and incoherence in compressive sampling,''
  \emph{Inverse problems}, vol.~23, no.~3, p. 969, 2007.

\bibitem{chi2011sensitivity}
Y.~Chi, L.~L. Scharf, A.~Pezeshki, and A.~R. Calderbank, ``Sensitivity to basis
  mismatch in compressed sensing,'' \emph{Signal Processing, IEEE Transactions
  on}, vol.~59, no.~5, pp. 2182--2195, 2011.

\bibitem{scharf2011sensitivity}
L.~L. Scharf, E.~K. Chong, A.~Pezeshki, and J.~R. Luo, ``Sensitivity
  considerations in compressed sensing,'' in \emph{Signals, Systems and
  Computers (ASILOMAR), 2011 Conference Record of the Forty Fifth Asilomar
  Conference on}.\hskip 1em plus 0.5em minus 0.4em\relax IEEE, 2011, pp.
  744--748.

\bibitem{pakroohanalysis}
P.~Pakrooh, L.~L. Scharf, A.~Pezeshki, and Y.~Chi, ``Analysis of fisher
  information and the cramer-rao bound for nonlinear parameter estimation after
  compressed sensing,'' in \emph{ICASSP}, 2013.

\bibitem{Yang2011offgrid}
Z.~Yang, L.~Xie, and C.~Zhang, ``Off-grid direction of arrival estimation using
  sparse bayesian inference,'' \emph{Signal Processing, IEEE Transactions on},
  vol.~61, no.~1, pp. 38--43, Jan 2013.

\bibitem{fannjiang2012coherence}
A.~Fannjiang and W.~Liao, ``Coherence pattern-guided compressive sensing with
  unresolved grids,'' \emph{SIAM Journal on Imaging Sciences}, vol.~5, no.~1,
  pp. 179--202, 2012.

\bibitem{chandrasekaran2012convex}
V.~Chandrasekaran, B.~Recht, P.~A. Parrilo, and A.~S. Willsky, ``The convex
  geometry of linear inverse problems,'' \emph{Foundations of Computational
  Mathematics}, vol.~12, no.~6, pp. 805--849, 2012.

\bibitem{CandesFernandez2012SR}
\BIBentryALTinterwordspacing
E.~Candes and C.~Fernandez-Granda, ``Towards a mathematical theory of
  super-resolution,'' March 2012. [Online]. Available:
  \url{http://arxiv.org/abs/1203.5871}
\BIBentrySTDinterwordspacing

\bibitem{TangBhaskarShahRecht2012}
\BIBentryALTinterwordspacing
G.~Tang, B.~N. Bhaskar, P.~Shah, and B.~Recht, ``Compressed sensing off the
  grid,'' July 2012. [Online]. Available: \url{http://arxiv.org/abs/1207.6053}
\BIBentrySTDinterwordspacing

\bibitem{chi2013compressive}
Y.~Chi and Y.~Chen, ``Compressive recovery of 2-d off-grid frequencies,'' in
  \emph{Proceedings of Asilomar Conference on Signals, Systems, and Computers},
  2013.

\bibitem{chen2013spectral}
Y.~Chen and Y.~Chi, ``Spectral compressed sensing via structured matrix
  completion,'' \emph{International Conference on Machine Learning (ICML)},
  2013.

\bibitem{chen2013robust}
------, ``Robust spectral compressed sensing via structured matrix
  completion,'' \emph{arXiv preprint arXiv:1304.8126}, 2013.

\bibitem{Hua1992}
Y.~Hua, ``Estimating two-dimensional frequencies by matrix enhancement and
  matrix pencil,'' \emph{IEEE Transactions on Signal Processing}, vol.~40,
  no.~9, pp. 2267 --2280, Sep 1992.

\bibitem{tropp2006algorithms2}
J.~A. Tropp, A.~C. Gilbert, and M.~J. Strauss, ``Algorithms for simultaneous
  sparse approximation. part i: Greedy pursuit,'' \emph{Signal Processing},
  vol.~86, no.~3, pp. 572--588, 2006.

\bibitem{tropp2006algorithms}
J.~A. Tropp, ``Algorithms for simultaneous sparse approximation. part ii:
  Convex relaxation,'' \emph{Signal Processing}, vol.~86, no.~3, pp. 589--602,
  2006.

\bibitem{lee2012subspace}
K.~Lee, Y.~Bresler, and M.~Junge, ``Subspace methods for joint sparse
  recovery,'' \emph{Information Theory, IEEE Transactions on}, vol.~58, no.~6,
  pp. 3613--3641, 2012.

\bibitem{kim2012compressive}
J.~M. Kim, O.~K. Lee, and J.~C. Ye, ``Compressive music: revisiting the link
  between compressive sensing and array signal processing,'' \emph{Information
  Theory, IEEE Transactions on}, vol.~58, no.~1, pp. 278--301, 2012.

\bibitem{davies2012rank}
M.~E. Davies and Y.~C. Eldar, ``Rank awareness in joint sparse recovery,''
  \emph{Information Theory, IEEE Transactions on}, vol.~58, no.~2, pp.
  1135--1146, 2012.

\bibitem{mishali2008reduce}
M.~Mishali and Y.~C. Eldar, ``Reduce and boost: Recovering arbitrary sets of
  jointly sparse vectors,'' \emph{Signal Processing, IEEE Transactions on},
  vol.~56, no.~10, pp. 4692--4702, 2008.

\bibitem{Caratheodory}
C.~Carath\'eodory and L.~Fej\'er, ``Uber den zusammenghang der extemen von
  harmonischen funktionen mit ihren koeffizienten und uber den
  picard-landauschen satz,'' \emph{Rendiconti del Circolo Matematico di
  Palermo}, vol.~32, pp. 218--239, 1911.

\bibitem{boyd2004convex}
S.~P. Boyd and L.~Vandenberghe, \emph{Convex optimization}.\hskip 1em plus
  0.5em minus 0.4em\relax Cambridge university press, 2004.

\bibitem{grant2008cvx}
M.~Grant, S.~Boyd, and Y.~Ye, ``Cvx: Matlab software for disciplined convex
  programming,'' \emph{Online accessiable: http://stanford. edu/\~{} boyd/cvx},
  2008.

\end{thebibliography}

\end{document}